\documentclass[conference]{IEEEtran}
\usepackage{color,graphicx,amsmath,amssymb,amsthm,epsfig,subfigure,mathrsfs,cite}
\setcounter{MaxMatrixCols}{20}
\begin{document}
\title{Deterministic Constructions for Large Girth Protograph LDPC Codes}

\author{
 \IEEEauthorblockN{Asit Kumar Pradhan}
 \IEEEauthorblockA{Dept. of Electrical Engg.\\
   IIT Madras, Chennai, India\\
   Email: ee11m004@ee.iitm.ac.in
   }
 \and
 \IEEEauthorblockN{Arunkumar Subramanian}
 \IEEEauthorblockA{Link-A-Media Devices Corp.\\
   Santa Clara, CA, USA
   } 
 \and
 \IEEEauthorblockN{Andrew Thangaraj}
 \IEEEauthorblockA{Dept. of Electrical Engg.\\ 
   IIT Madras, Chennai,India\\
   Email: andrew@ee.iitm.ac.in}}

\newtheorem{theorem}{Theorem}

\maketitle

\begin{abstract}
	For certain degree-distribution pairs with non-zero fraction of degree-two bit nodes, 
the bit-error threshold of the standard ensemble of Low Density Parity
Check (LDPC) codes is known to be close to capacity. However, the degree-two bit
nodes preclude the possibility of a block-error
threshold. Interestingly, LDPC codes
constructed using protographs allow the possibility of having both degree-two bit nodes and a block-error
threshold. In this paper, we analyze density evolution for protograph
LDPC codes over the binary erasure channel and show that their bit-error probability decreases double
exponentially with the number of iterations when the erasure probability is below the bit-error threshold and long chain of degree-two variable nodes are
avoided in the protograph. We present deterministic constructions of such
protograph LDPC codes with girth logarithmic in
blocklength, resulting in an exponential fall in bit-error
probability below the threshold. We provide optimized protographs, whose block-error
thresholds are better than that of the standard ensemble  with minimum
bit-node degree three. These protograph LDPC codes are theoretically
of great interest, and have applications, for instance, 
in coding with strong secrecy over wiretap channels.
\end{abstract} 

\section{Introduction}

Constructing a sequence of codes with efficient encoders/decoders and a guarantee that block-error
rate tends to zero with increasing block-length is one of the major
goals of coding theory. At rates below capacity, such ``good'' sequences are
known to exist, but many classical code sequences do not have this
property. Modern code constructions, such as Low Density Parity Check
(LDPC) codes, define a sequence of ensembles of codes with efficient
decoders and probabilistic concentration results that come close to achieving the
goal of constructing good code sequences \cite{mct}. Recently, polarization \cite{5075875} and spatial coupling \cite{5695130} have
been used to construct good code sequences for binary symmetric channels.

In this work, we are primarily interested in deterministic
constructions of sequences of
good LDPC codes with block-error thresholds nearing capacity limits. We will stick
to the binary erasure channel, though the work can be extended to
other binary-input symmetric channels. Most of the prior work in this
area provides probabilistic guarantees on ensembles of LDPC codes, and
most of these guarantees are for bit-error probabilities. The block-error threshold problem for LDPC codes was first studied in
\cite{Lentermaier}, where standard ensembles with a minimum bit-node degree (denoted
$l_{\min}$) of three was shown to have
block-error thresholds. For the standard irregular ensemble with $l_{\min}=2$, the
block-error rate, surprisingly, tends to a constant as block-length
increases. The main cause for this problem is the presence of long chains of
degree-two nodes in the standard ensemble. However, (bit-error)
capacity-approaching LDPC degree distributions
have a significant fraction of degree-two bit nodes. For instance, the
best threshold for rate-1/2 codes with minimum left degree three is
only about 0.461 leaving a significant gap to the capacity threshold of
0.5. So, while degree-two nodes are needed to approach capacity, they
preclude the possibility of a block-error threshold. One of the goals of this work is to construct LDPC codes with
block-error thresholds that improve this gap to capacity.

A key idea in the construction of LDPC code ensembles with degree-two nodes
and decaying block-error performance is the notion of multi-edge type
(MET) ensembles \cite{Richardson,mct}, of which the protograph
LDPC code ensemble \cite{Thrope} has received considerable practical attention
because of ease of implementation. In \cite{Richardson}, the standard ensemble is
restricted in a suitable fashion to limit the impact of degree-two
nodes. In \cite{Thrope}, density evolution and optimization
for protograph LDPC code ensembles was described and carried out. In
\cite{5174517}, protographs are optimized for thresholds nearing
capacity, and linear growth of ensemble-averaged weight distribution is
established for protograph LDPC code
ensembles. There have been numerous other work in the construction of
protographs in practical implementations.

The use of large-girth graphs in constructing LDPC codes started with
Gallager's thesis \cite{gallagerthesis}, where regular LDPC codes with
large girth were constructed. The Lubotzky-Phillips-Sarnak (LPS)
constrction \cite{LPS} of Ramanujan graphs has been used in the
construction of regular and irregular LDPC codes in
\cite{Rosenthal}. As shown in \cite{arunForensics}, large-girth LDPC codes with minimum left
degree, $l_{\min}>2$, achieve an exponential decay of bit error, i.e
$\mathcal{O}(\text{exp}(-c_1n^{c_2}))$ for constants $c_1$, $c_2$, 
over a binary erasure channel BEC$(\epsilon)$, when $\epsilon$
is less than the density evolution threshold $\epsilon^*$. So,
large-girth LDPC codes have a block-error threshold equal to their
bit-error thresholds, when $l_{\min}>2$.

In this work, we provide deterministic constructions for a sequence of
good LDPC codes by using large-girth graphs along with suitable
protographs that contain degree-two nodes. This allows us to achieve
BEC block-error thresholds as high as 0.486 with small ($8\times 16$)
protographs. To do this, we begin by studying the
density evolution for protograph ensembles, and show that
bit-error decreases double exponentially in number of iterations at
erasure probabilities smaller than the threshold even
when $l_{\min}=2$, if long chains of degree-two variable nodes are
avoided through the protograph. To avoid chains of degree-two nodes,
we allow at most one degree-two variable node to connect to a check 
node in the protograph. 

We then provide a construction for large-girth protograph LDPC codes
starting with a given large-girth regular graph and performing
suitable node splitting operations. Using the LPS Ramanujan
graphs, we provide deterministic constructions of
large-girth protograph LDPC codes that achieve
an exponential decay for bit-error probability with blocklength and
still contain degree-two bit nodes. In comparison with prior work, we have analyzed the
density evolution for protograph LDPC codes directly and showed the
double-exponential decay with iterations even in the presence of
degree-two nodes. Our node splitting construction is more general than
the one in \cite{Rosenthal} and provides a deterministic construction
with guaranteed block-error probability behavior.

\section{Protograph LDPC Codes}
\label{sec:protograph-ldpc-code}
Following the notation in \cite{Thrope}, a protograph $G=(V,C,E)$ is a
bipartite graph with $V$ and $C$ being the sets of variable and check nodes, respectively, and $E$ being the set of undirected edges that connect a vertex in $V$ to a vertex in $C$. Multiple edges are allowed between a pair of nodes $(v,c)\in V\times C$. 

A protograph can be represented by a base
matrix $B$, where $B(i,j)$ is the number of edges between the $i$-th
check node (denoted $c_i$) and the $j$-th variable node (denoted $v_j$). For example, consider a base matrix 
\begin{align}
 \nonumber B=
 \begin{bmatrix}
  1 & 1 & 1 & 0 \\
  1 & 1 & 1 & 1\\
  0 & 1 & 1 & 1
 \end{bmatrix}.
\end{align}
The protograph corresponding to the above base matrix is shown in Fig. \ref{fig:a}.
\begin{figure}[htb]
\centering
\subfigure[Example of a protograph.]
{
 \includegraphics[scale=0.55]{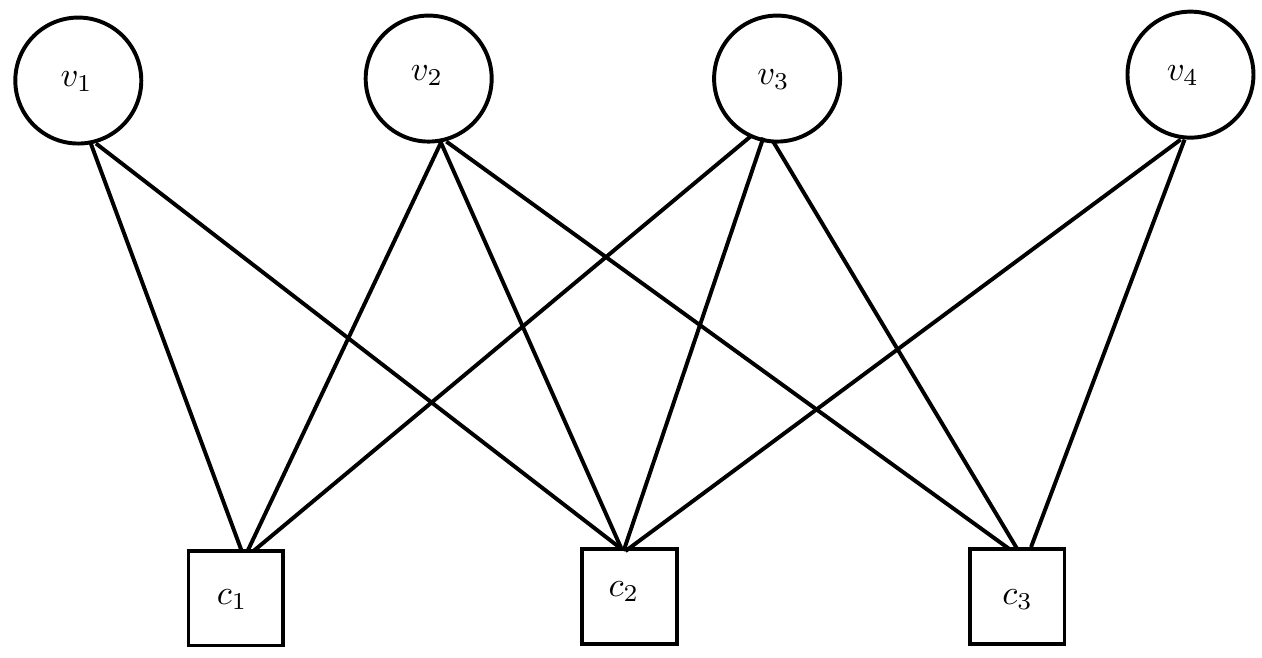}
\label{fig:a}
}
\subfigure[Computation graph $C_1(v_1)$.]
{
 \includegraphics[scale=0.4]{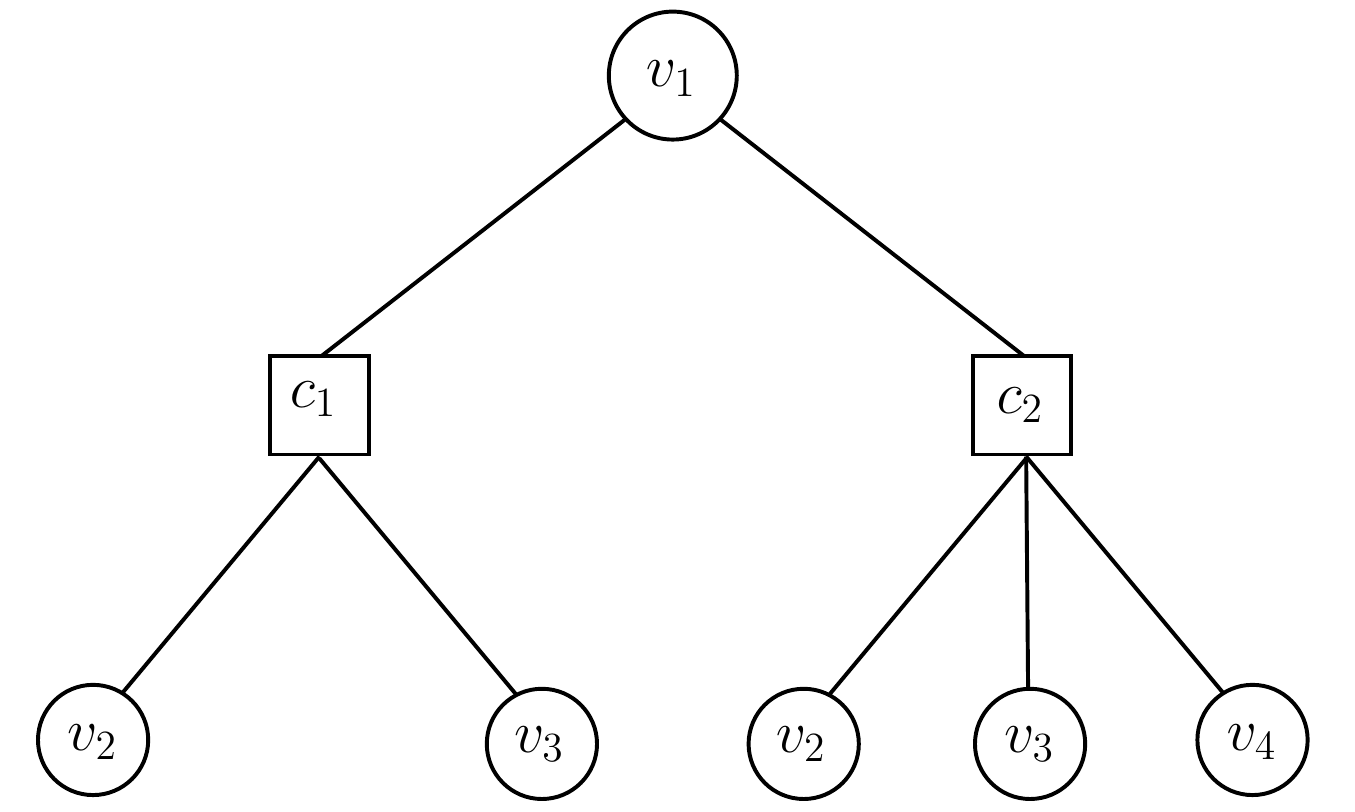}
\label{fig:b}  
}
\caption{Protograph and computation graph.}
\label{fig:protex}
\end{figure}

\subsection{Lifted Graphs}
We can apply a copy and permute operation to a protograph to obtain expanded or lifted graphs of different sizes \cite{Thrope}. A given protograph $G$ is copied, say $T$ times, with the $t$-th copy having nodes $(v,t)$ and $(c,t)$, and edges $(e,t)$. Then, for each edge $e$ in the protograph, we assign a permutation $\pi_e$ of the set $\{1,2,\cdots,T\}$. In the permute operation, an edge $(e,t)$ connecting $(v,t)$ and $(c,t)$ is permuted so as to connect variable node $(v,t)$ to check node $(c,\pi_e(t))$. We will denote the lifted graph as $G'=(V',C',E')$. The lifted graph of a protograph can be thought of as a Tanner graph of an LDPC code, which is referred to as a protograph LDPC code. In general, $( T!)^{|E|}$ lifted graphs or protograph LDPC codes can be obtained from a protograph, each corresponding to a different permutation, where $|V|$ is the number of variable nodes in the protograph. The collection of these lifted graphs is called the protograph ensemble of LDPC codes. Protograph LDPC codes are a special class of MET-LDPC codes, with each edge in the protograph being of a different type. The degree distribution of check and variable nodes in the lifted graph is the same as that of the protograph. So, the (designed) rate of the protograph LDPC code is given by $1-\frac{|C|}{|V|}$, where $|C|$ and $|V|$ denote the number of check and variable nodes in the protograph, respectively. 

Let $v$ be a variable node in the lifted graph $G'$. The $t$-iteration computation graph associated with $v$, denoted $C_t(v)$, is defined as the subgraph of $G'$ obtained by traversing from $v$ up to the $t$-th iteration level along all edges \cite{mct}. The structure of the computation graph is completely determined by the protograph $G$ for all lifted graphs $G^{'}$. 
An example of a computation graph is shown in Fig. \ref{fig:b}. 

\subsection{Density Evolution for Protograph Codes}
Let us consider the standard message-passing decoder over a binary
erasure channel (BEC$(\epsilon)$) run on a lifted graph $G'$ derived
from a protograph $G=(V,C,E)$. Since the lifted graphs form an MET
ensemble with $|E|$ edge types, density evolution proceeds with $|E|$
erasure probabilities, one for each edge in the protograph
\cite{Richardson2004}. Let $E=\{e_1,e_2,\ldots,e_{|E|}\}$ with edge
$e\in E$ connecting variable node $v_e$ with check node $c_e$. Let
$x_t(i)$ be the probability that an erasure is sent from variable node
to check node along edge $e_i$ in the $t$-th iteration. Similarly, let
$y_t(j)$ be the probability that an erasure is sent from check node to
variable node along edge $e_j$ in the $t$-th iteration. The density evolution recursion \cite{mct} is given by
$$x_0(i) =\epsilon,$$
\begin{equation}
\label{eq:1}y_{t+1}(j)=1-\prod_{i\in E_c(e_j)}(1-x_{t}(i)), \;\forall t\geq1,
\end{equation}
\begin{equation}
\label{eq:2}x_{t+1}(i)=\epsilon\prod_{j\in E_v(e_i)}y_{t+1}(j), \;\forall t\geq1,
\end{equation}
where $E_c(e_j)=\{i \neq j:c_{e_j}=c_{e_i}\}$ is the set of all indices of edges adjacent to the same check node as the edge $e_j$, and $E_v(e_i)=\{j \neq i:v_{e_i}=v_{e_j}\}$ is the set of all indices of edges adjacent to the same variable node as the edge $e_i$. The density evolution threshold, denoted $\epsilon_{\text{th}}$, for the protograph-based LDPC code ensemble is defined as the largest value of $\epsilon$ for which erasure probability on each edge of the protograph tends to zero, as $t\rightarrow \infty$. i.e. $\epsilon_{\text{th}}=\sup\{\epsilon:\max_{i}x_t(i)\to0\}$. Clearly, this is also the threshold value below which the erasure probability of a variable node in the protograph goes to zero with increasing iterations (as long as $l_{\min} > 1$).

\subsection{Asymptotic Behavior of Density Evolution}
\label{sec:asympt-behav-dens}
\begin{theorem}For $\epsilon<\epsilon_{\text{th}}$, $\max_i x_t(i)$
exhibits a double-exponential decay with $t$ when
not more than one degree-two variable node is connected to a check node in the protograph.
\end{theorem}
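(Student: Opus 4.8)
The plan is to collapse the claim to a scalar recursion. Write $\delta_t:=\max_i x_t(i)$, let $d_c$ be the largest check-node degree in the protograph and set $D:=d_c-1$; throughout I assume $l_{\min}\ge 2$ (as elsewhere in the paper — a degree-one variable node would keep $x_t$ pinned at $\epsilon$). First I record two facts. (a) Density evolution is monotone in $\epsilon$: a one-line induction on $t$ using \eqref{eq:1}--\eqref{eq:2} shows $x_t(i),y_t(j)$ are nondecreasing in $\epsilon$ for all $t,i,j$, so the set of $\epsilon$ with $\delta_t\to 0$ is down-closed; hence $\epsilon<\epsilon_{\text{th}}$ forces $\delta_t\to 0$. (b) From \eqref{eq:1} and $1-\prod_k(1-a_k)\le\sum_k a_k$ we get $y_{t+1}(j)\le D\,\delta_t$; plugging into \eqref{eq:2} and using $|E_v(e_i)|=\deg(v_{e_i})-1\ge 1$ gives the crude bound $\delta_{t+1}\le \epsilon D\,\delta_t$, and, when $\deg(v_{e_i})\ge 3$ so that $|E_v(e_i)|\ge 2$, the sharper bound $x_{t+1}(i)\le \epsilon(D\delta_t)^2$.

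The core step is a two-step quadratic recursion: there is a constant $K$, depending only on the protograph and $\epsilon$, with $\delta_{t+2}\le K\,\delta_t^2$ for all $t$. Fix an edge $e_i$ and split on $\deg(v_{e_i})$. If $\deg(v_{e_i})\ge 3$, apply the sharper bound at time $t+1$ and then $\delta_{t+1}\le\epsilon D\delta_t$: $x_{t+2}(i)\le\epsilon(D\delta_{t+1})^2\le\epsilon^3 D^4\,\delta_t^2$. If $\deg(v_{e_i})=2$, let $e_j$ be the other edge at $v_{e_i}$, incident to a check $c_{e_j}\ne c_{e_i}$, so that $x_{t+2}(i)=\epsilon\,y_{t+2}(j)\le\epsilon\sum_{k\in E_c(e_j)}x_{t+1}(k)$. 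This is where the hypothesis enters: $e_j$ is incident to the degree-two variable node $v_{e_i}$, hence by assumption $c_{e_j}$ has no other degree-two neighbour, so every edge $e_k\in E_c(e_j)$ is incident to a variable node of degree $\ge 3$ (barring the degenerate case of a degree-two variable joined to a check by a double edge, which I exclude — such a node is a length-two cycle and is absent from the constructions of interest). Therefore each $x_{t+1}(k)\le\epsilon(D\delta_t)^2$, giving $x_{t+2}(i)\le\epsilon\cdot D\cdot\epsilon D^2\delta_t^2=\epsilon^2 D^3\,\delta_t^2$. Taking $K=\epsilon^2 D^3\max\{\epsilon D,1\}$ covers both cases.

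The quadratic recursion then bootstraps into double-exponential decay in the standard way. Since $\delta_t\to 0$, choose $t_0$ with $K\delta_{t_0}\le\tfrac12$ and $K\delta_{t_0+1}\le\tfrac12$. Applying $\delta_{t+2}\le K\delta_t^2$ separately along the even- and odd-indexed subsequences based at $t_0$, the rescaled quantities $u_m:=K\delta_{t_0+2m}$ and $K\delta_{t_0+1+2m}$ satisfy $u_{m+1}\le u_m^2$ with $u_0\le\tfrac12$, hence $u_m\le 2^{-2^{m}}$; so $\delta_t\le \tfrac1K 2^{-2^{\lfloor(t-t_0)/2\rfloor}}\le A\exp\!\big(-\alpha\,(\sqrt2)^{\,t}\big)$ for suitable constants $A,\alpha>0$, which is double-exponential decay in $t$.

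The main obstacle is the structural observation driving the degree-two case: "at most one degree-two variable node per check" is precisely the condition forbidding two degree-two variable nodes from sharing a check, hence forbidding chains of degree-two nodes, and it is this that forces the far-side neighbours of $c_{e_j}$ to have degree $\ge 3$ so that one extra iteration restores the squaring. The remaining care is routine: verifying the two uniform bounds in fact (b), and handling the degenerate double-edge configuration at a degree-two variable node (by a genericity assumption on the protograph).
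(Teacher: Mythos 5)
Your proof is correct and follows essentially the same route as the paper: a linear check-node bound $y_{t+1}(j)\le (d-1)\bar x_t$, the structural hypothesis used to guarantee that two consecutive iterations pick up a squared exponent (your case split on $\deg(v_{e_i})$ is exactly the paper's ``either $l_m\ge 3$ or $l_m'\ge 3$'' step), the resulting two-step recursion $\bar x_{t+2}\le K\bar x_t^{2}$, and the standard bootstrap to double-exponential decay. Your added care --- monotonicity in $\epsilon$ to justify $\bar x_t\to 0$, handling both parities in the bootstrap, and explicitly excluding a degree-two variable node joined to a check by a double edge (a degenerate configuration the paper's argument also tacitly rules out) --- only makes explicit what the paper leaves implicit.
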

\begin{proof}
We will repeatedly use the following inequality. For any $x\in[0,1]$
and a positive integer $d$, 
\begin{align}
\label{eq:3}(d-1)x\geq1-(1-x)^{d-1}.
\end{align}
Let $d$ be the maximum degree of a check node in the protograph, and
let $\bar{x}_t=\max_i x_t(i)$. Since $\epsilon<\epsilon_{\text{th}}$, we
have $\bar{x}_t\to0$. We pick $t$ large enough to have
$0\leq(d-1)\bar{x}_t<1$. 

We note that 
\begin{align*}
	1 - x_t(j) &\geq 1 - \bar{x}_t, \quad \forall j \\
	\Rightarrow \prod_{j \in E_c(e_i)} \left( 1 - x_t(j) \right) &\geq \left( 1 - \bar{x}_t \right)^{(d-1)}, \quad \forall i
\end{align*}

Using (\ref{eq:1}) and (\ref{eq:3}), we get 
\begin{align}
	y_{t+1}(j) &\leq (d-1) \bar{x}_t, \quad \forall j \label{eq:4} \\
	\Rightarrow x_{t+1}(i) &= \epsilon \prod_{e_j \in E_v(e_i)} y_{t+1}(j)  \notag\\
	&\leq \epsilon (d-1)^{(l_m - 1)} \left( \bar{x}_t \right)^{(l_m - 1)}, \quad \forall i \label{eq:5}
\end{align}

Repeating the process, we get 
\begin{align*}
	y_{t+2}(j') &\leq (d-1) x_{t+1}(i) \leq \epsilon (d-1)^{l_m} \left( \bar{x}_t \right)^{(l_m - 1)}\\
	x_{t+2}(i') &\leq \epsilon^{l_m'} (d-1)^{l_m(l_m' - 1)} \left( \bar{x}_t \right)^{(l_m - 1) (l_m' - 1)} \\
	\Rightarrow \bar{x}_{t+2}&\leq \epsilon^{l_m'} (d-1)^{l_m(l_m' - 1)} \left( \bar{x}_t \right)^{(l_m - 1) (l_m' - 1)} 
\end{align*}
where $l_m$ and $l_m'$ are the minimum degrees of variable nodes in $E_v(e_{i})$ and $E_v(e_{i'})$, respectively.
Since a check node is connected to at most one degree-two
variable node, we have that either $l_m\ge3$ or $l_m'\ge3$. So,
$(l_m-1)(l_m'-1)\ge2$. So, we have
\begin{align}
	\label{eq:6}\bar{x}_{t+2}\leq A (\bar{x}_t)^2,
\end{align}
where $A=\epsilon^{l_m'}(d-1)^{l_m(l_m' - 1)}$ does not vary with $t$. Since $\epsilon < \epsilon_{\text{th}}$ and $\bar{x}_t\to0$, there 
exists an $R$ such that $A(\bar{x}_{R})^2< 1$ and $ (d-1)\bar{x}_R<
1$. Following arguments similar to those in \cite{arunForensics}, we
can show that 
\begin{equation}
  \label{eq:9}
\bar{x}_{R+2i}\leq (A\bar{x}_R)^{2^i},  
\end{equation}
which implies a double-exponential decay of $\bar{x}_t$ with $t$ (for
sufficiently large $t$).
\end{proof}

For the standard ensemble of LDPC codes, density evolution analysis is
approximate because of the following assumptions:
\begin{itemize}
 \item[1.] The computation graph is a tree.
 \item[2.] The node degrees in the computation graph are independent.
\end{itemize}
Typically, probabilistic concentration results and asymptotic
guarantees are used to support the practical validity of density
evolution in the standard ensemble. In this work, we use protograph
codes and MET density evolution --- these make Assumption 2 above
unnecessary. For Assumption 1 to be true, we consider large-girth graphs. 
\section{Large-Girth Protograph LDPC Codes}
\label{sec:large-girth-prot}
To achieve strong secrecy, Tanner graphs whose girth increases as
$\log n$ with blocklength $n$ have been used in
\cite{arunForensics}. In this work, we extend the construction in
\cite{arunForensics} to construct bipartite graphs from protographs
with girth increasing as $\log n$. Once girth is $\Theta(\log n)$, message
error rates for iterations
up to $\Theta(\log n)$ will exactly follow the protograph density evolution
of (\ref{eq:1})-~(\ref{eq:2}). Following the analysis in Section
\ref{sec:asympt-behav-dens}, the message error rate falls
double-exponentially in $\log n$, or exponentially in block-length
$n$. This results in an inverse polynomial decay, $\mathcal{O}(1/n^k)$ for any
$k$, for block-error rate.

\subsection{ LPS Graphs $X^{p,q}$} 
While the construction method can
use any sequence of regular large-girth graphs, one explicit
possibility is the LPS
construction \cite{LPS}. LPS graphs belong to the class
of Cayley graphs. Given a group $G$ and an inverse-closed subset $S$ of
$G$, i.e, $ s^{-1}\in S$,$\forall s \in S$, the Cayley graph
($\Gamma(G,S)$) is the undirected simple graph defined as follows:
\begin{itemize}
\item The vertex set of $\Gamma(G,S)$ is $G$.
\item For any $g\in G$ and $s \in S$, there is an edge between $g$ and $gs$.
\end{itemize}

Let $p$ and $q$ be distinct, odd primes with $q>2\sqrt{p}$. The LPS
graph, denoted $X^{p,q}$ \cite{LPS}, is a connected, $(p+1)$-regular graph and has
the following properties:
\begin{itemize}
 \item If $p$ is a quadratic residue $\mod q$, then $X^{p,q}$ is a
   non-bipartite graph with $q(q^2-1)/2$ vertices and girth
   $g(X^{p,q}) \geq 2\log_p q$.
\item If $p$ is a quadratic non-residue $\mod q$, then $X^{p,q}$ is a
  bipartite graph with $q(q^2-1)$ vertices and girth
  $g(X^{p,q}) \geq 4\log_p q-\log_p 4$.
\end{itemize}
When $X^{p,q}$ is non-bipartite, we can convert it to a bipartite
graph using the following algorithm \cite{arunForensics}:
\begin{itemize}
 \item Given a graph $G$ with vertices $V(G)$ and edges $E(G)$, construct a copy
   $G^\prime$ with a new vertex set $V(G')$ and a new edge set $E(G')$. Let $f:V(G)\rightarrow
   V(G^\prime)$ be the 1-1 mapping from a vertex in $G$ to its copy in $G'$.
\item Create a bipartite graph $H$ with vertex set $V(G)\cup V(G^\prime)$ and
  edge set $E(H)=\{(x,f(y)):(x,y)\in E(G)\}$.
\end{itemize}
Following \cite{B.Bollobas}, it was shown in \cite{arunForensics} that $g(H)\geq g(G)$.

For constructing a sequence of $d$-regular large-girth graphs for an arbitrary $d$ using the LPS graphs, we use the following trick from \cite{arunForensics}. There exists an infinite number of primes $p$ such that $d|(p+1)$. For each such prime $p$ and a suitable $q$, we construct $X^{p,q}$ and split each $(p+1)$-degree node into $(p+1)/d$ nodes of degree $d$. As shown in \cite{arunForensics}, the node splitting does not reduce girth and we have a large-girth graph of the required degree $d$.
 \subsection{Node Splitting for MET-LDPC Codes}
The construction of a large-girth protograph LDPC code starts with a $d$-regular large-girth bipartite graph
$G$ with $d$ being the number of edges in the protograph. The
bipartition of $G$ will contain $|V(G)|/2$ left and right vertices. 
We associate $d$ sockets with each vertex of $G$, and associate each edge
connected to a vertex with one of the sockets. 

According to K\"onig's theorem, the edge chromatic number of a bipartite graph is equal to the maximum
degree of its nodes. Therefore $G$ has an edge coloring involving $d$
colors. Based on this edge coloring, we define a coloring of the
sockets in $G$ by the colors $S = \left\{ s_1, s_2, \ldots, s_d
\right\}$. Let $P$ and $Q$ be two fixed partitions of $S$, with
$P=\{P_1,P_2,\ldots, P_l\}$ and $Q=\{Q_1,Q_2,\ldots,Q_r\}$. If $G$ is a
Cayley graph, then the colors $S$ can be associated with the generating set,
  along with a direction. 

The main step in the construction is splitting the left and right vertices of $G$ according to $P$ and $Q$,
respectively. A left vertex $v$ is split into sub-vertices $v_1, v_2,
\ldots, v_l$, such that for any $i$, the sockets of $v$ in $P_i$ get
associated with $v_i$. A right vertex $c$ is split into sub-vertices
$c_1, c_2, \ldots, c_r$, such that for any $j$, the sockets of $c$ in
$Q_j$ get associated with $c_j$. The resulting Tanner graph, denoted
$T(G,P,Q)$, will have $l|V(G)|/2$ variable nodes and $r|V(G)|/2$ check
nodes, and the associated MET-LDPC code will have design rate $1-\frac{r}{l}$.

We note the following important properties of $T(G, P, Q)$.
\begin{enumerate}
\item It was shown in \cite{arunForensics} that the above node-splitting
procedure does not decrease girth. So, the girth of $T(G,P,Q)$ is not less than 
the girth of $G$ for any $P$ and $Q$.

\item The Tanner graph $T(G,P,Q)$ is, in fact, a lifted version of
a protograph with $l$ variable nodes indexed by $P_i$, $1\le i\le l,$ and $r$ check
nodes indexed by $Q_j$, $1\le j\le r$. Variable node $P_i$ in the
protograph is connected by an edge to a check node $Q_j$, whenever
$P_i\cap Q_j\ne \emptyset$. So, the number of edges in the protograph
is $|S|$.

\item The protograph is copied $|G|/2$ times and the edge permutation
  is induced by the edge connections of the original graph $G$. 
\end{enumerate}

The procedure to generate a sequence of large-girth protograph LDPC
codes can be summarized as follows. By fixing the degree $d$ and the 
partitions $P$ and $Q$, we fix a protograph. We then apply the above 
node-splitting procedure to a sequence of large-girth $d$-regular
bipartite graphs. This results in a sequence of large-girth Tanner graphs that 
are liftings of the protograph defined by $(d,P,Q)$. From the above construction, we have the following theorem. 
\begin{theorem}
For a given protograph with threshold $\epsilon_{\text{th}}$, there
exists a deterministic sequence of large-girth liftings with
increasing length $n$ such that block error probability falls as $ne^{-cn}$, for a constant $c>0$, over a BEC$(\epsilon)$ with $\epsilon<\epsilon_{\text{th}}$.
\end{theorem}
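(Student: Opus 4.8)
The plan is to combine the three ingredients already assembled in the excerpt: (i) the node-splitting construction gives, for each suitable prime $p$ with $d\mid(p+1)$ and appropriate $q$, a lifted Tanner graph $T(X^{p,q},P,Q)$ whose girth is at least $g(X^{p,q}) = \Theta(\log q) = \Theta(\log n)$, where $n$ is the blocklength; (ii) the girth lower bound guarantees that the $t$-iteration computation graph $C_t(v)$ of every variable node $v$ is a tree for all $t$ up to roughly $\tfrac14 g$, so that the message-passing erasure probabilities in the finite graph coincide exactly with the protograph density-evolution quantities $x_t(i)$ of \eqref{eq:1}--\eqref{eq:2} up to iteration $t^\ast = \Theta(\log n)$; (iii) Theorem~1 (the double-exponential decay) applied to those density-evolution quantities. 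So the skeleton is: pick the blocklength-$n$ member of the LPS-derived sequence, run the message-passing decoder for exactly $t^\ast = c_0 \log n$ iterations for a suitable constant $c_0$, bound the per-variable-node bit-erasure probability, and then union-bound over the $n$ variable nodes.

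**Next I would** make the chain of estimates quantitative. Since $\epsilon < \epsilon_{\text{th}}$, Theorem~1 gives $\bar{x}_{R+2i} \le (A\bar{x}_R)^{2^i}$ with $A\bar x_R < 1$, i.e. $\bar{x}_t \le \gamma^{2^{(t-R)/2}}$ for some fixed $\gamma\in(0,1)$ and all even $t-R$; equivalently $\bar{x}_t \le \exp(-c_1 2^{c_2 t})$ for constants $c_1,c_2>0$ depending only on the protograph and $\epsilon$. Because the girth is at least $\kappa \log n$ for some explicit $\kappa>0$ (coming from the $2\log_p q$ or $4\log_p q - \log_p 4$ bounds and $n = \Theta(q^3)$), choosing $t^\ast = \lfloor \tfrac{\kappa}{4}\log n\rfloor$ keeps all computation graphs $C_{t^\ast}(v)$ tree-like, so the true bit-erasure probability of $v$ after $t^\ast$ iterations equals $\epsilon \prod_{j\in E_v} y_{t^\ast}(j) \le \bar{x}_{t^\ast}$ (or a fixed protograph-dependent multiple thereof). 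Substituting $t = t^\ast = \Theta(\log n)$ into $\bar x_t \le \exp(-c_1 2^{c_2 t})$ gives a per-node bit-erasure probability at most $\exp(-c_1 n^{c_2 \kappa/4 \cdot \log 2 / \log 2}) = \exp(-c_1 n^{\delta})$ for some $\delta>0$; in particular, by taking $c_0$ slightly smaller if needed, this is at most $e^{-cn}$...

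Actually the double-exponential-in-$t$ with $t=\Theta(\log n)$ yields decay like $\exp(-n^{\delta})$, which for the purposes of the theorem (block error $\le n e^{-cn}$) is even stronger once $\delta \ge 1$; I would simply note that choosing the number of iterations $t^\ast = c_0\log n$ with $c_0$ large enough (still below $\tfrac14 g$, which is possible because $2^{c_2 t^\ast}$ only needs to exceed $n/c_1$ up to constants, requiring $t^\ast \gtrsim \log_2 n$, well within the available $\Theta(\log n)$ budget) makes $\bar x_{t^\ast} \le e^{-cn}$ for a fixed $c>0$. Then a union bound over the $n \le |V(G)|/2 \cdot l$ variable nodes gives $P_{\text{block}} \le n \bar{x}_{t^\ast} \le n e^{-cn}$, which is exactly the claimed bound.

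**The main obstacle** I anticipate is bookkeeping the two competing logarithmic budgets and making sure the constants are mutually consistent: the girth provides a tree-depth budget of order $\log n$, and the double-exponential decay needs a number of iterations of order $\log n$ to drive $\bar{x}_t$ below $e^{-cn}$ — both are $\Theta(\log n)$, so one must verify that the constant hidden in "number of iterations needed" is strictly smaller than the constant hidden in "girth/4", possibly by relating them explicitly through the LPS parameters $(p,q)$ and the protograph-dependent $c_1,c_2$. A secondary point to handle carefully is the passage from the density-evolution quantity $\bar{x}_t$ (a message-erasure probability on edges) to the actual bit-erasure probability of a decoded variable node, which is $\epsilon\prod_{j\in E_v(e_i)} y_t(j)$ and is bounded by a protograph-dependent constant times $\bar{x}_t$ — routine, but it should be stated. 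Everything else (the girth bound for $T(G,P,Q)$, the tree property of short computation graphs, and Theorem~1 itself) is already available in the excerpt and can be cited directly. I would also remark that the sequence is genuinely deterministic because the LPS graphs $X^{p,q}$, the edge coloring from K\"onig's theorem, and the fixed partitions $P,Q$ are all explicitly constructible, so no probabilistic choice enters at any stage.
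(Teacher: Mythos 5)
Your overall route --- large girth $\Rightarrow$ computation graphs are trees for $t^\ast=\Theta(\log n)$ iterations $\Rightarrow$ the finite-graph message erasure probabilities coincide exactly with the protograph density evolution $\Rightarrow$ Theorem~1's double-exponential decay $\Rightarrow$ union bound over the $n$ variable nodes --- is exactly the argument the paper intends (the paper itself offers only this sketch, no detailed proof). The gap is precisely at the point you flagged as ``the main obstacle'' and then waved away: the two $\Theta(\log n)$ budgets do \emph{not} have compatible constants. From $\bar{x}_{R+2i}\le(A\bar{x}_R)^{2^i}$, reaching $\bar{x}_{t^\ast}\le e^{-cn}$ requires $2^{(t^\ast-R)/2}\gtrsim n$, i.e.\ $t^\ast\ge 2\log_2 n-O(1)$. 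The available tree depth is only about $g/4$, and for the bipartite LPS graphs $g\ge 4\log_p q-\log_p 4$ with $n=\Theta(q^3)$, so the budget is roughly $\tfrac13\log_p n=\tfrac{\log_2 n}{3\log_2 p}$, which is strictly smaller than $2\log_2 n$ for every admissible $p$ (and $p$ is forced to be fairly large, since $d=|E|$ of the protograph must divide $p+1$). So ``taking $c_0$ large enough'' is simply not available. Moreover, no choice of regular base graph can repair this: by the Moore bound a $d$-regular graph has girth at most about $2\log_{d-1}n$, hence $t^\ast\le\tfrac12\log_{d-1}n$ and $2^{(t^\ast-R)/2}\le O\!\left(n^{\log 2/(4\log(d-1))}\right)$, whose exponent is at most $1/4$. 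This style of argument therefore can never produce a per-bit bound of $e^{-cn}$ on a bounded-degree graph.

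What the argument honestly delivers is $\bar{x}_{t^\ast}\le\exp(-c_1 n^{c_2})$ with $c_2=\Theta(1/\log p)<1$, hence block-error probability at most $n\exp(-c_1 n^{c_2})$ after the union bound (your reduction from edge-message erasure probability to the bit-erasure probability of a variable node is fine). That is superpolynomially small and matches both the paper's own prose conclusion (block error $O(1/n^k)$ for every $k$) and the $O(\exp(-c_1 n^{c_2}))$ bit-error form quoted from \cite{arunForensics}; it does not match the literal $ne^{-cn}$ of the theorem statement. In that respect your final calibration step mirrors an overstatement in the theorem itself rather than something you could fix by bookkeeping: to make your proof sound, you should carry out the constant accounting you postponed and state the conclusion as $n e^{-cn^{\delta}}$ with an explicit $\delta$ (for the bipartite LPS family, $\delta$ just below $\tfrac{1}{6}\log_p 2$), rather than asserting that the iteration budget suffices for $e^{-cn}$.
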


\subsection{An Example} \label{sec:proto-example}
Let $d = 12$, and let the socket colors 
$  S=\{s_1,s_2,\cdots,s_{12}\} $
be split into two partitions $P$ and $Q$ given by 
\begin{align*}
P_1 &= \{ s_1,s_2\} & P_2 &= \{ s_3,s_4,s_5\}\\
P_3 &= \{ s_6,s_7,s_8\} & P_4 &= \{ s_9,s_{10},s_{11},s_{12}\}
\end{align*}
\begin{align*}
	Q_1 &= \left\{ s_1, s_3, s_6, s_9, s_{10}, s_{11}\right\} \\
	Q_2 &= \left\{ s_2, s_4, s_5, s_7, s_8, s_{12} \right\}
\end{align*}

The variable nodes are split as shown in Fig.~\ref{a}. The protograph generated by this choice of $(d, P, Q)$ is shown in Fig.~\ref{b}. The design rate of this protograph is $1/2$. 
\begin{figure}
\centering
\subfigure[]
{
 \includegraphics[scale=0.25]{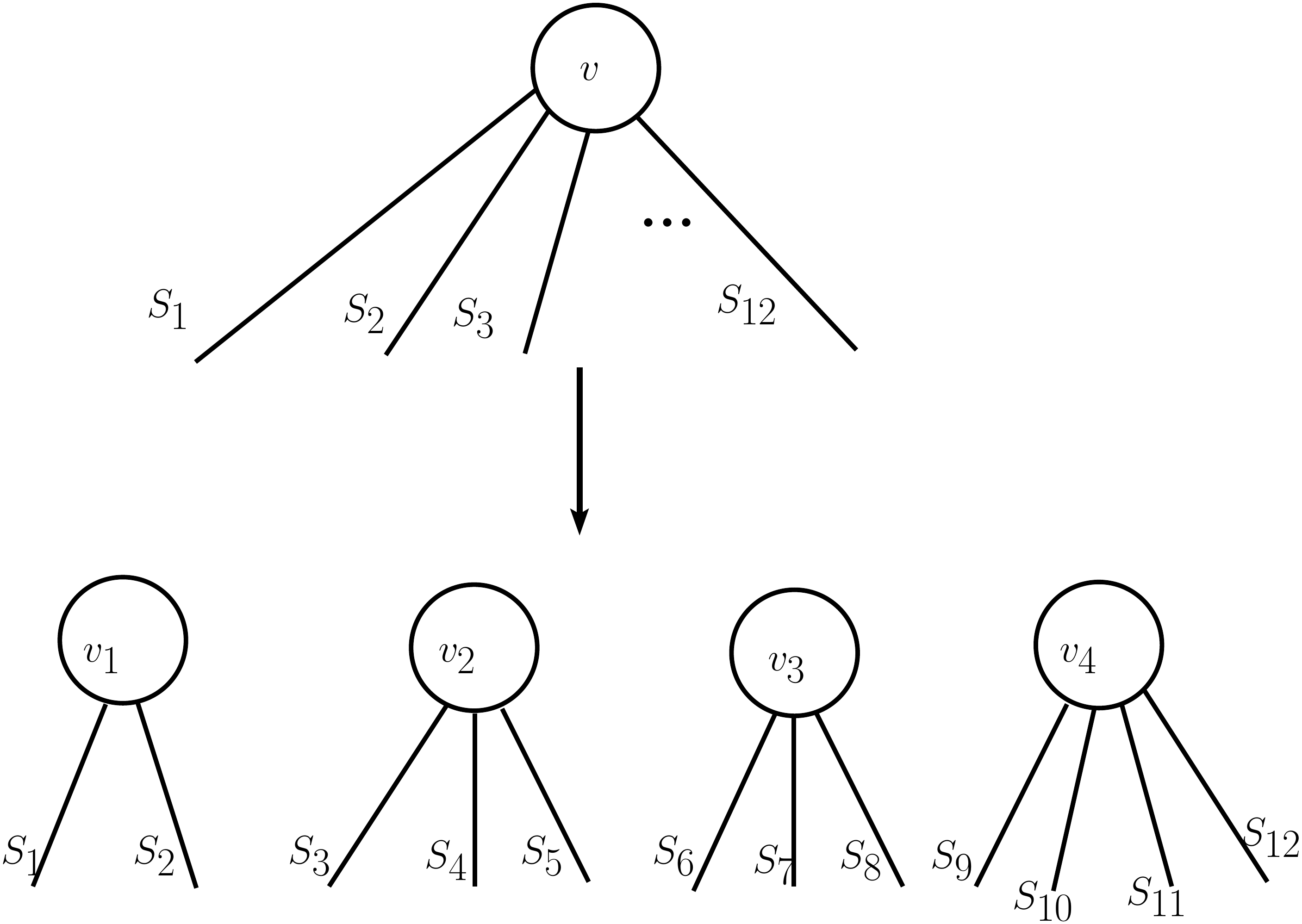}
\label{a}
}
\subfigure[]
{
 \includegraphics[scale=0.24]{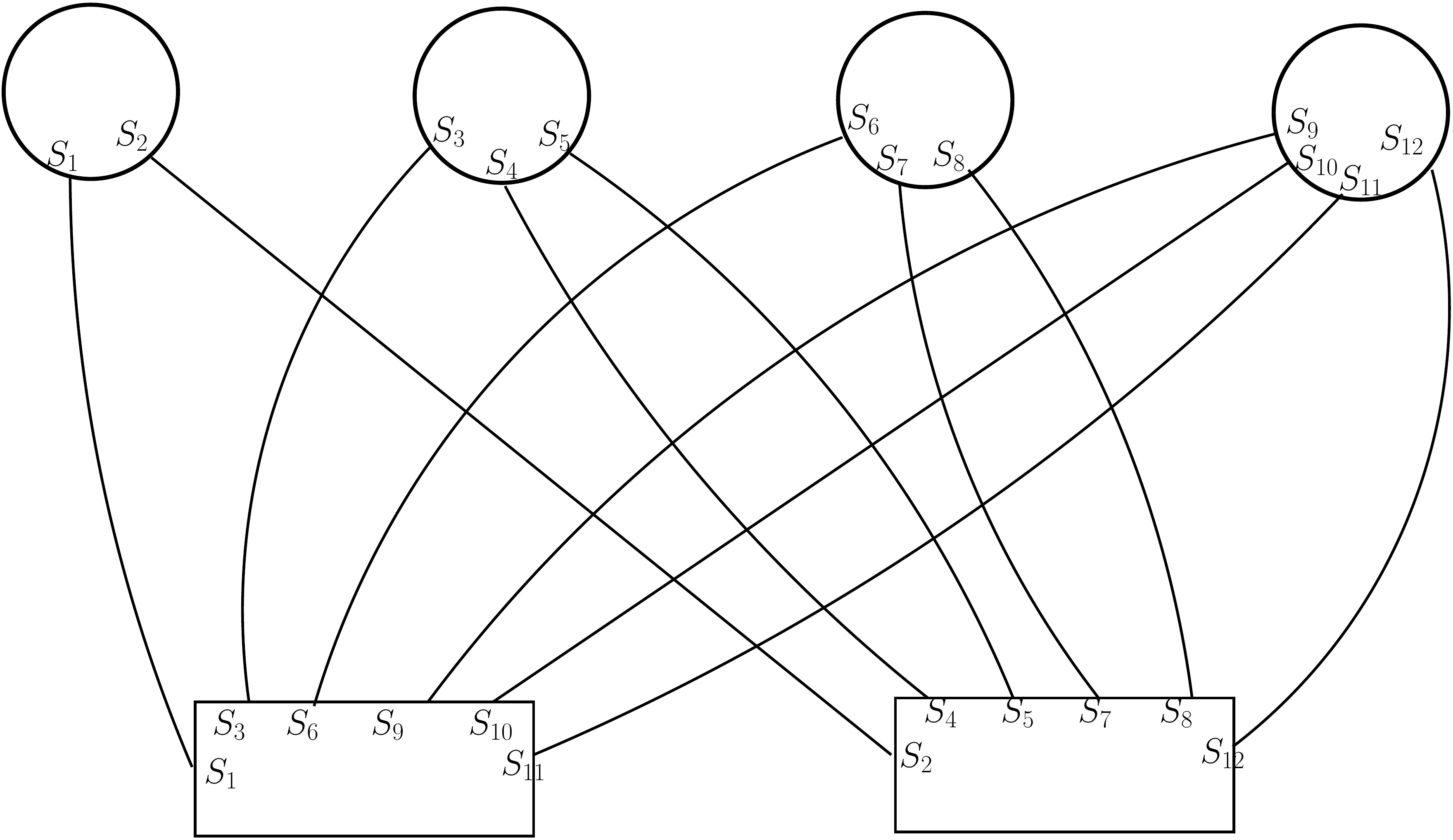}
\label{b}  
}
\caption{Illustration of (a) variable node splitting, and (b) protograph for \ref{sec:proto-example}.}
\end{figure}
\section{Optimization of Protographs}
\label{sec:optim-prot}
We have optimized protographs using differential evolution \cite{DE}\cite{Storn}, where
we use the threshold given by density evolution as the cost function. The
salient steps of the differential evolution algorithm are described briefly in the following:
\begin{itemize}
 \item[1.] Initialization: For generation $G=0$, we randomly choose
   $N_P$ base matrices $B_{k,G}$, with $0\le k\le N_P-1$, of size $|C|\times|V|$, where
   $N_P=10|C||V|$. Each entry of $B_{k,G}$ is binary, chosen
   independently and uniformly. 
\item[2.] Mutation: Protographs of a particular generation are interpolated as follows.
\begin{align}
 M_{k,G}=[B_{r_1,G}+0.5(B_{r_2,G}-B_{r_3,G})],
\end{align}
where $r_1$, $r_2$, $r_3$ are randomly-chosen distinct values in the
range $[0,N_P-1]$, and $[x]$ denotes the absolute value of $x$ rounded to
the nearest integer.
\item[3.] Crossover: A candidate protograph $B'_{k,G}$ is chosen as
  follows. The $(i,j)$-th entry of $B'_{k,G}$ is set as the $(i,j)$-th
  entry of $M_{k,G}$ with probability $p_c$, or as the $(i,j)$-th
  entry of $B_{k,G}$ with probability $1-p_c$. We use $p_c=0.88$ in
  our optimization runs. In $B'_{k,G}$, if any check node is connected to
  more than one degree-two variable node, edges are reassigned. So,
  each $B'_{k,G}$ avoids long chain of degree-two variable nodes.
\item[4.] Selection: For generation $G+1$, protographs are selected as
  follows. If the threshold of $B_{k,G}$ is greater than that of
  $B'_{k,G}$, set $B_{k,G+1}=B_{k,G}$; else, set $B_{k,G+1}=B'_{k,G}$.
\item[5.] Termination: Steps 2--4 are run for several generations (we
  run up to $G=6000$) and the protograph that gives the best threshold is chosen as the optimized protograph.
\end{itemize}
Results from our optimization runs are given in Table \ref{tab:optprot}.
We see that the optimized protographs give better thresholds than
irregular standard ensemble codes with minimum degree 3.
 An optimized $4\times 8$ protograph with threshold 0.479 is given by the following base matrix:
\begin{equation}
  \label{eq:8}
\begin{bmatrix}
1 & 2 & 2 & 3 & 4 & 1 & 1 & 0  \\
0 & 1 & 0 & 0 & 5 & 0 & 0 & 1 \\
1 & 0 & 0 & 0 & 3 & 0 & 4 & 1\\
1 & 0 & 1 & 0 & 6 & 1 & 0 & 0 
\end{bmatrix}
\end{equation}
An optimized $8\times 16$ protograph with threshold 0.486 is given by the following base matrix:
\begin{equation}
  \label{eq:7}
\begin{bmatrix}
1 & 2 & 0 & 0 & 1 & 0 & 0 & 4 & 0 & 0 & 0 & 0 & 0 & 0 & 0 & 1\\
0 & 1 & 0 & 0 & 0 & 1 & 0 & 0 & 2 & 2 & 1 & 0 & 0 & 0 & 1 & 1\\
0 & 3 & 1 & 2 & 1 & 0 & 0 & 0 & 4 & 0 & 0 & 3 & 2 & 2 & 0 & 3\\
0 & 5 & 0 & 0 & 0 & 0 & 1 & 1 & 0 & 0 & 1 & 0 & 0 & 1 & 0 & 0\\
1 & 3 & 1 & 1 & 1 & 2 & 0 & 0 & 1 & 0 & 0 & 0 & 0 & 0 & 0 & 0\\
1 & 5 & 0 & 0 & 0 & 3 & 1 & 0 & 0 & 0 & 1 & 0 & 0 & 0 & 0 & 0\\
0 & 4 & 0 & 0 & 0 & 0 & 0 & 1 & 1 & 0 & 0 & 0 & 0 & 0 & 0 & 1\\
0 & 5 & 0 & 0 & 0 & 0 & 0 & 0 & 0 & 1 & 0 & 0 & 1 & 0 & 1 & 0
\end{bmatrix}  
\end{equation}
Note that the above protographs have block-error threshold same as the
bit-error threshold, and the block-error rate falls inverse polynomially in
block-length under the large girth construction as described in
Sections \ref{sec:protograph-ldpc-code} and
\ref{sec:large-girth-prot}. 

\begin{table}[tbh]
  \centering
  \begin{tabular}{|c|c|}
    \hline
    {\bf Code type}&{\bf Threshold}\\
    \hline
    Standard ensemble ($l_{\min}=3$)&0.461\\
    \hline
    $4\times 8$ protograph in (\ref{eq:8})&0.479\\
    \hline
    $8\times 16$ protograph in (\ref{eq:7})&0.486\\
    \hline
  \end{tabular}
  \caption{Optimized protographs and thresholds (rate 1/2).}
  \label{tab:optprot}
\end{table}

In future work, we hope to obtain better thresholds that are closer to
capacity limits,
by further increasing the size of the
protograph.
\section{Conclusion}
\label{sec:conclusion}
In this work, we presented a deterministic construction for a sequence
of codes for the binary erasure channel with block-error rate
falling inverse polynomially with block-length at rates close to the
capacity. The codes are protograph LDPC codes that avoid long degree-two
variable node chains and are constructed from large-girth graphs. To the best
of our knowledge, this is the first deterministic construction of LDPC.
codes with guaranteed block-error thresholds nearing capacity limits.
   
\bibliographystyle{IEEEtran}
\bibliography{IEEEabrv,references}

\end{document}